\newcommand{\rn}[1]{\expandafter\@slowromancap\romannumeral #1@}
\DeclareMathOperator{\degree}{\textrm{degree}}
\renewcommand{\cong}{\textrm{Cong}}
\newtheorem{theorem}{Theorem}
\newtheorem{lemma}[theorem]{Lemma}
\newcommand{\etal}{{\em et al.}}
\begin{document}

\title{Optimal bandwidth-aware VM allocation for Infrastructure-as-a-Service}
\author{Debojyoti Dutta \\
Cisco Inc.\\
\href{mailto:dedutta@cisco.com}{dedutta@cisco.com}
\and
Michael Kapralov \\
Stanford University\\
\href{mailto:kapralov@stanford.edu}{kapralov@stanford.edu}
\and
Ian Post \\
Stanford University\\
\href{mailto:itp@stanford.edu}{itp@stanford.edu} 
\and
Rajendra Shinde \\
Stanford University\\
\href{mailto:rbs@stanford.edu}{rbs@stanford.edu}
}

\maketitle
\begin{abstract} 
Infrastructure-as-a-Service (IaaS) providers need to offer richer services to be competitive while optimizing their resource usage to keep costs down. Richer service offerings include new resource request models involving bandwidth guarantees between virtual machines (VMs). Thus we consider the following problem: given a VM request graph (where nodes are VMs and edges represent virtual network connectivity between the VMs) and a real data center topology, find an allocation of VMs to servers that satisfies the bandwidth guarantees for every virtual network edge---which maps to a path in the physical network---and minimizes congestion of the network.  

Previous work has shown that for arbitrary networks and requests, finding the optimal embedding satisfying bandwidth requests is $\mathcal{NP}$-hard. However, in most data center architectures, the routing protocols employed are based on a spanning tree of the physical network. In this paper, we prove that the problem remains $\mathcal{NP}$-hard even when the physical network topology is restricted to be a tree, and the request graph topology is also restricted. We also present a dynamic programming algorithm for computing the optimal embedding in a tree network which runs in time $O(3^kn)$, where $n$ is the number of nodes in the physical topology and $k$ is the size of the request graph, which is well suited for practical requests which have small $k$.  Such requests form a large class of web-service and enterprise workloads. Also, if we restrict the requests topology to a clique (all VMs connected to a virtual switch with uniform bandwidth requirements), we show that the dynamic programming algorithm can be modified to output the minimum congestion embedding in time $O(k^2n)$.
\end{abstract}

 \section{Introduction} \label{sec:intro}
Infrastructure-as-a-Service (IaaS) providers like Amazon~\cite{amazon}, Rackspace~\cite{rackspace} and Go-grid~\cite{gogrid} provide computing and other services on demand and charge based on usage. This has resulted in the commoditization of computing and storage. Typically, these providers provide service level agreements (SLA)~\cite{rackspace-sla} where they guarantee the type of virtual machines (VMs) that they provide and the amount of disk space available to these VMs. Although some providers offer additional services like dedicated firewalls and load-balancers, no network performance guarantees are provided, which are critical for workloads like content distribution networks, desktop virtualization, etc. 
Given the rapid growth and innovation in these services~\cite{aws-growth}, it is important for service providers (SPs) to offer innovative service models for differentiation, e.g., by offering richer network SLAs to be competitive while optimizing their resource usage to keep costs down.

Next generation cloud services will require improved quality of service (QoS) guarantees for application workloads. For example, multi-tier enterprise applications~\cite{tier-apps} require network isolation and QoS guarantees such as bandwidth guarantees, and for over-the-top content distribution using a cloud infrastructure, bandwidth, jitter and delay guarantees are important in determining performance. Similar guarantees are necessary for MapReduce-based analytics workloads too. Moreover, networking costs are currently a significant fraction of the total infrastructure cost in most data center (DC) designs~\cite{ibm-dc-network,cisco-dc-guide} since servers are cheap compared to core switches and routers. Thus, in order to provide richer network SLAs, it is important for SPs to ensure that networking resources are efficiently utilized 
while at the same time ensuring low congestion (that leads to better load balancing and more room for overprovisioning). 

In this paper we consider a virtualization request model in which clients can request bandwidth guarantees between pairs of virtual machines (VMs)~\cite{SecondNet} for which SPs will allocate resources within their infrastructure. 
This naturally leads us to study the following resource allocation problem: given a VM request graph---where nodes are VMs and edges represent virtual network connectivity between the VMs---and a real data center topology, find an allocation of VMs to servers that satisfies the bandwidth guarantees for every virtual network edge and minimizes congestion of the network. Note that in this setting, each virtual edge maps to a path in the physical network topology.

The above request graph model is driven by application workloads that execute on top of network infrastructure provided by the SPs. Common workloads include enterprise applications~\cite{tier-apps}, MapReduce~\cite{mapreduce}, and web hosting, and different workloads can lead to different service models. For instance, many web services request a small number of VMs to implement the web servers, the application servers, and the database. The VM implementing the web server receives a request and forwards it to an application server VM, which in turn queries the database server VMs. In such cases, specific bandwidth guarantees between the outside world and the web server, the web server and the application server, and so on, are important to ensure QoS.  In MapReduce workloads on the other hand, it has been shown that network optimization can yield better results than adding machines~\cite{mr-network}, but in this setting since all the VMs implementing map and reduce tasks communicate with each other via data shuffle, the aggregate bandwidth available to the VMs may determine the application performance.

A number of metrics have been studied to measure the network load including congestion, jitter, delay, hop count, or a combination of the above. Here we focus on minimizing congestion, but we also note that our algorithmic techniques are generic and can easily be adapted to optimize other metrics. 

It has been shown previously that the problem of embedding virtual requests in arbitrary networks is $\mathcal{NP}$-hard \cite{vne, SecondNet}. However in most data center networks, routing protocols used rely on a spanning tree of the physical network \cite{cisco-dc-guide}. Hence, in this paper we study the problem of minimizing network congestion while allocating virtual requests when the network topology is restricted to be a tree. 
 
\subsection{Our Contributions}
First, we prove that optimally allocating VMs remains $\mathcal{NP}$-hard even when both the physical network topology and request topology are highly restricted.  
We show that if the network topology is a tree then even for simple request topologies like weighted paths with the weights signifying the amount of bandwidth required between the corresponding VMs, it is $\mathcal{NP}$-hard to approximate the minimum congestion to a factor better than $O(\theta)$, where $\theta$ is the ratio of the largest to smallest bandwidth requirements in the path request. We also show that in the unweighted case (or uniform bandwidth requirement on all edges) the problem is $\mathcal{NP}$-hard to approximate to within a factor of $O(n^{1-\epsilon})$ for any $\epsilon\in (0, 1)$, even for the case when the request topology is a tree. 

Given these complexity results, we cannot hope for an efficient algorithm for all instances of the problem.  However, we note that in practice, many workloads consist of a small number of VMs allocated in a huge datacenter.
Accordingly, our second result is a dynamic programming algorithm (Algorithm \ref{alg:cong}) for computing the minimum congestion embedding of VMs in a tree network for any request graph, which satisfies the pairwise bandwidth requirements and runs in time $O(3^kn)$, where $n$ is the number of nodes in the physical topology and $k$ is the number of VMs in the request graph.  
Enterprise workloads often consist of small requests with specific bandwidth requirements between VMs, and for these instances the exponential $O(3^k)$ term is quite small, and can thus be optimally served using our algorithm whose run time is only linear in the network size.

Third, workloads like Map-Reduce jobs have too many VMs to use an algorithm with a runtime of $O(3^kn)$, but these have uniform bandwidth requirements between the VMs \cite{TrafficAware}, and we show that the exponential dependence on $k$ can be removed when the request network is uniform.
For the special case in where the requests are restricted to be cliques or {\em virtual clusters} \cite{sig11}, we propose an algorithm that finds the minimum congestion embedding in $O(k^2n)$ time (Algorithm \ref{alg:virtual-cluster}).
Hence our algorithms yield the minimum congestion embeddings of virtualization requests for several common use cases.

We also present simulations which validate our results for common request models and practical network configurations. 

\subsection{Outline of the paper}

The paper is organized as follows. We first review previous work in Section~\ref{subsec:related} and formally define the problem and notation in Section~\ref{sec:prelim}. We prove the hardness results in Section~\ref{sec:hardness} followed the algorithms in Section~\ref{sec:algo}. In Section~\ref{sec:simu} we provide simulations, which validate the running time and correctness of our algorithms. Finally, we conclude and point to future work in Section~\ref{sec:conclusion}.

\section{Related Work}
\label{subsec:related}


Previous work has shown that the problem of embedding virtual request graphs in arbitrary physical networks is $\mathcal{NP}$-hard \cite{vne,SecondNet}. A number of heuristic approaches have been proposed including mapping VMs to nodes in the network greedily and mapping the flows between VMs to paths in the network via shortest paths and multi-commodity flow algorithms \cite{fan, zhu}. However these approaches do not offer provable guarantees and may lead to congested networks in some circumstances.  
The authors of \cite{vne} assume network support for path-splitting \cite{rethinking} in order to use a multi-commodity flow based approach for mapping VMs and flows between them to the physical network, but this approach is not scalable beyond networks containing hundreds of servers \cite{SecondNet}.  

Guo \etal \cite{SecondNet} proposed a new architectural framework, Secondnet, for embedding virtualization requests with bandwidth guarantees. This framework considers requests with bandwidth guarantees $f_{ij}$ between every pair of VMs $(v_i, v_j)$. This framework provides rigorous application performance guarantees and hence is suitable for enterprise workloads but at the same time also establishes hardness of the problem of finding such embeddings in arbitrary networks. 
Our results employ the SecondNet framework but restrict attention to tree networks.

Very recently, Ballani \etal \cite{sig11} have described a {\em virtual cluster} request model, which
consists of requests of the form $\langle k,B\rangle$ representing $k$ VMs each connected to a virtual switch with a link of bandwidth $B$. A request $\langle k,B\rangle$ can be interpreted (although not exactly) as a clique request on $k$ VMs with a bandwidth guarantee of $B/(k-1)$ on each edge of the clique. 
They describe a novel VM allocation algorithm for assigning such requests on a tree network with the goal of maximizing the ability to accommodate future requests. 
For each $v$ in the tree network $T$, they maintain an interval of values that represents the number of VMs that can be allocated to $T_v$ without congesting the uplink edge from $v$ and allocate VMs to sub-trees greedily. We generalize this approach to the case of virtualization requests in the Secondnet framework and we use a dynamic programming solution in order to find the optimal minimum congestion embedding.  
By restricting the requests to {\em virtual clusters}, \cite{sig11} offers a tradeoff to the providers between meeting specific tenant demands and flexibility of allocation schemes. In this work, we explore this tradeoff further and show that it is possible to formulate flexible allocation schemes even in the Secondnet framework for small requests. 
   
The problem of resource allocation has also been studied in the virtual private network (VPN) design setting where bandwidth guarantees are desired between nodes of the virtual network \cite{duffield99, gupta01}. In this setting, a set of nodes of the physical network representing the VPN endpoints is provided as the input, and the task is to reserve bandwidth on the edges of the network in order to satisfy pairwise bandwidth requirements between VPN endpoints. The fixed location of VPN endpoints makes this problem significantly different from that of embedding virtualization requests in a network, since the latter involves searching over all possible embeddings of the VMs in the network. 

\section{Preliminaries}
\label{sec:prelim}
An instance of our problem consists of a datacenter network and a request network.
The datacenter network $N$ is a tree on $n$ nodes rooted at a gateway node $g$.  Edges in $N$ have capacities $c_e$ representing their bandwidth.  Let $L$ denote the set of leaves of $N$.

The request network $G_R$ is an arbitrary, undirected graph on $k+1$ nodes.  Nodes in $G_R$ consist of a set $V$ of $k$ virtual machines $v_1,\ldots, v_k$ and a special gateway node $g$.  Edges $e$ in the request graph specify bandwidth guarantees $f_e$ (flow requirements) and are divided into two types: edges of type-\rn{1} have the form $e=(v_i,g)$ and specify a requirement for routing $f_e$ flow between $v_i$ and the gateway node $g$ (uplink bandwidth to the outside world), and edges of type-\rn{2} have the form $e=(v_i,v_j)$ and specify flows between two virtual machines $v_i$ and $v_j$ (``chatter'' bandwidth between virtual machines).  We use $R^{\rn{1}}$ and $R^{\rn{2}}$ to denote the sets of type-\rn{1} and type-\rn{2} edges and $R = R^{\rn{1}} \cup R^{\rn{2}}$ to denote all edges.

A solution consists of an embedding $\pi: V \rightarrow L$ mapping virtual machines onto leaves in the datacenter network.  For simplicity we will assume only a single VM can be mapped to each leaf, although it is easy to modify our algorithm so that each datacenter node $v$ can support up to $n_v$ VMs.  The gateway node $g$ in $G_R$ is always mapped to the gateway in $N$.  If $\pi$ maps the endpoints of edge $e=(v_i,v_j)$ (equivalently $e=(v_i,g)$) onto $\pi(v_i)$ and $\pi(v_j)$, then $e$ contributes $f_e$ flow to every edge along the path $P_{\pi(v_i),\pi(v_j)}$ between $\pi(v_i)$ and $\pi(v_j)$ in $N$.  The congestion of an edge $e$ in $N$ under embedding $\pi$ is 
\[
\cong(\pi,e) = \frac{1}{c_e} \sum_{(u,v) \in R \textnormal{ s.t. } e \in P_{\pi(u),\pi(v)}} f_{(u,v)}
\]
and our goal is to find $\pi$ minimizing $\max_{e \in N} \cong(\pi,e)$.

%

\section{Hardness results}
\label{sec:hardness}
\newcommand{\e}{\epsilon}

In this section we show that the embedding problem is $\mathcal{NP}$-hard even with the restricted topologies of the host and request graphs.   In particular, we show that the problem of embedding a {\em weighted path request}, which is perhaps the simplest weighted request topology,  is $\mathcal{NP}$-hard to approximate to a factor better than $O(\theta)$, where $\theta$ is the ratio of the largest to smallest bandwidth requirements.  Furthermore, we show that in the unweighted case the problem is $\mathcal{NP}$-hard to approximate to a factor smaller than $O(n^{1-\e})$ for any constant $\e\in (0, 1)$, where $n$ is the number of VMs in the request, even for the case when the request topology is a tree. 

Both of our reductions are from 3-partition.  An instance of 3-partition consists of a multiset $S = \{s_1, \ldots, s_{3m}\}$ of $3m$ integers summing to $mB$, and the goal is to determine whether $S$ can be partitioned into $m$ subsets $S_1, \ldots, S_m$ such that the sums of elements in each of the $S_i$ are equal to $B$ and $|S_i|=3$ for all $i$.  Crucially, 3-partition remains $\mathcal{NP}$-complete even when the size of the integers are bounded by a polynomial in $m$:

\begin{theorem}[\cite{garey-johnson}] The 3-partition problem is strongly $\mathcal{NP}$-complete, even when $B/4<s_i<B/2$ for all $i$, forcing any partition to consist of triples.
\end{theorem}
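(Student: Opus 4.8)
Since $3$-partition is a classical problem, the plan is to recover the standard argument of Garey and Johnson, whose two ingredients are membership in $\mathcal{NP}$ and a \emph{pseudo-polynomial} reduction from a number-free $\mathcal{NP}$-complete problem. Membership is immediate: a candidate partition is checked in polynomial time, and because every $s_i \le B = \mathrm{poly}(m)$ the instance size is polynomial even under a unary encoding, which is exactly what strong $\mathcal{NP}$-completeness demands. For the lower bound I would reduce from $3$-Dimensional Matching (3DM): given disjoint sets $W,X,Y$ of size $q$ and a family $M \subseteq W \times X \times Y$ of $k$ triples, decide whether some $q$ of them cover every element exactly once. The point of starting from 3DM is that its instances carry no large integers, so \emph{any} reduction whose output integers are bounded by a polynomial in $q$ and $k$ automatically yields strong, and not merely ordinary, $\mathcal{NP}$-completeness.

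Before the reduction I would record the easy structural remark that the window $B/4 < s_i < B/2$ makes the cardinality requirement $|S_i|=3$ automatic: any subset summing to exactly $B$ has more than two members (two elements sum to less than $B$) and fewer than four (four elements sum to more than $B$), hence exactly three. It therefore suffices to construct integers for which subsets summing to $B$ encode matchings, and the stated window can be imposed as a final normalization: adding a common constant $C$ to every integer and replacing the target by $B + 3C$ preserves the three-term sum structure while, for $C$ polynomial and large, pushing every value into the interval $\bigl((B+3C)/4,\,(B+3C)/2\bigr)$.

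The heart of the reduction is the encoding of 3DM into integers. I would associate with each triple $m=(w_a,x_b,y_c)$ a \emph{triple number}, with each element of $W \cup X \cup Y$ an \emph{element number}, and with the remaining slack a supply of \emph{dummy numbers}, designed so that a group summing to $B$ must consist of one triple number together with the element numbers of the elements that triple covers. Summing over a complete partition then forces the selected triples to cover each element exactly once, i.e. to form a perfect matching, and conversely. The device that makes the local sum test work is to give each integer a \emph{constant} number of base-$r$ coordinate fields, with $r$ chosen larger than $k$ so that no carries propagate between fields; a triple number stores its indices $a,b,c$ while the matching element numbers store the complementary values, so that a field sums to the full digit $r$ exactly when the indices agree and falls short otherwise, a discrepancy the no-carry property renders detectable.

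The step I expect to be the main obstacle is keeping all integers polynomially bounded while still enforcing the \emph{global} matching condition. A one-shot positional encoding with one digit per element would have $\Theta(q)$ digits and hence exponentially large values, enough only for ordinary $\mathcal{NP}$-completeness, so the "covered exactly once" requirement cannot reside inside a single number. The resolution I would implement is to push the global constraint into the \emph{counting} of numbers: for each element one supplies exactly one element number together with $d-1$ dummy numbers, where $d$ is the number of triples containing that element, so that the per-group arithmetic need only certify constant-size index agreement (values $r^{O(1)} = \mathrm{poly}$) while the balance of element versus dummy numbers forces each element to be used by exactly one selected triple. Verifying that no \emph{spurious} group, mixing mismatched indices or the wrong count of dummies, can accidentally reach $B$ is the delicate part, and is precisely what the base-$r$ bookkeeping is there to exclude; the historically robust route carries this out in two stages, $\text{3DM} \to \text{4-partition} \to 3\text{-partition}$, and I would follow that decomposition should the direct argument become unwieldy.
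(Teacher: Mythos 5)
The paper does not prove this theorem at all: it is quoted from Garey and Johnson~\cite{garey-johnson} and used as a black box, serving as the source problem for the paper's own reductions in Theorems~\ref{thm:hardness-1} and~\ref{thm:hardness-2}. So there is no in-paper proof to compare your attempt against; what can be judged is whether your reconstruction of the classical argument is sound, and whether it would suffice for the role the theorem plays here.

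Your outline correctly captures the textbook strategy: membership in $\mathcal{NP}$, a reduction from 3-Dimensional Matching whose output integers are polynomially bounded (which is precisely what upgrades ordinary to strong $\mathcal{NP}$-completeness), the observation that the window $B/4 < s_i < B/2$ forces every block of a valid partition to be a triple, the dummy numbers that convert the global ``covered exactly once'' constraint into a counting constraint, and the base-$r$ no-carry encoding. Two things keep it from being a proof rather than a roadmap. First, the group structure you describe---one triple number together with the element numbers of the three elements that triple covers---has \emph{four} members, not three; what you have sketched is the 3DM-to-4-partition reduction, and the passage from 4-partition to 3-partition (splitting numbers and re-engineering the target so that mixed groups cannot accidentally reach it) is a genuinely separate construction that you only name as a fallback rather than carry out. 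Second, the verification that no spurious group can sum to the target---which you yourself identify as the delicate part---is explicitly deferred, and that verification is the entire content of the correctness argument. As a reconstruction of Garey and Johnson your proposal is faithful in its architecture but stops where the real work begins; for the purposes of this paper, however, nothing is lost, since the result is imported by citation and the paper's contribution builds on top of it.
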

\renewcommand{\P}{\mathcal{P}}

\subsection{Weighted topologies} 

\begin{theorem} \label{thm:hardness-1} The embedding problem  is $\mathcal{NP}$-complete even when restricted to instances where the request graph is a weighted path, and the host network is a tree. Moreover, it is $\mathcal{NP}$-hard to approximate to within a factor better than $\theta/6$, where $\theta$ is the ratio of the largest to smallest weight in the request graph.
\end{theorem}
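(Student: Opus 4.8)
The plan is to reduce from 3-partition, relying crucially on its \emph{strong} $\mathcal{NP}$-completeness (the Theorem above) so that the integers $s_i$, and hence the total $mB$, are polynomially bounded; this is what keeps the constructed instance polynomial in size. Given an instance $S=\{s_1,\dots,s_{3m}\}$ with $\sum_i s_i = mB$ and $B/4<s_i<B/2$, I will build a weighted path request $G_R$ and a tree network $N$ so that a valid 3-partition exists if and only if $G_R$ embeds into $N$ with small congestion.

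For the request I would represent each item $s_i$ by a \emph{segment}: a sub-path of $s_i$ virtual machines joined by \emph{heavy} edges of weight $H:=\theta$, and then splice consecutive segments together using a single \emph{light} edge of weight $1$ between the last VM of segment $i$ and the first VM of segment $i+1$. The result is one weighted path on $k=mB$ VMs with weight ratio exactly $\theta=H/1$, having $3m-1$ light edges and all remaining edges heavy. For the network I would take a root $g$ with $m$ children $u_1,\dots,u_m$ (the \emph{bins}), give each $u_j$ exactly $B$ leaves, set every uplink capacity $c_{(g,u_j)}=1$, and set the leaf-edge capacities to $2H$. Since $N$ has exactly $mB$ leaves, every embedding fills each bin with exactly $B$ VMs.

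The completeness direction is the design goal: from a partition into triples $S_1,\dots,S_m$ each summing to $B$, place the $B$ VMs of the three segments assigned to $S_j$ onto the $B$ leaves of bin $u_j$. Then no heavy edge crosses a bin boundary, so each uplink $(g,u_j)$ carries only light edges; as each bin holds three segments and each segment has at most two incident light edges, at most $6$ light edges cross any uplink, giving uplink congestion at most $6$. The capacity $2H$ ensures each internal leaf edge (which carries flow from the at most two heavy edges at its VM, hence at most $2H$) has congestion at most $1$, so the leaf edges are never the bottleneck and this embedding has max congestion at most $6$.

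For soundness I argue the contrapositive. If some embedding keeps every segment entirely inside one bin, then each full bin consists of complete segments whose sizes sum to $B$; because $B/4<s_i<B/2$ forces exactly three segments per bin, this is a valid 3-partition. Hence if no 3-partition exists, every embedding must split some segment across two bins, so at least one heavy edge of weight $H$ traverses an uplink of capacity $1$, forcing congestion at least $H=\theta$. Thus yes-instances admit congestion at most $6$ while no-instances force congestion at least $\theta$; since the decision problem is clearly in $\mathcal{NP}$, deciding ``congestion $\le 6$'' is $\mathcal{NP}$-complete, and no polynomial-time algorithm can approximate within a factor better than $\theta/6$ unless $\mathrm{P}=\mathcal{NP}$. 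The main things to get right are the capacity bookkeeping that renders the leaf edges irrelevant and makes a crossed uplink the unique bottleneck, and the use of the bound $B/4<s_i<B/2$ to guarantee that any packing with no segment split is exactly a 3-partition.
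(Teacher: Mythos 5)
Your proposal is correct and follows essentially the same reduction as the paper: strong 3-partition mapped to a height-two tree with $m$ bins of $B$ leaves each, and a path request made of heavy segments of sizes $s_i$ spliced together by light edges, with the bound $B/4 < s_i < B/2$ forcing triples and any split segment forcing a heavy edge across a bin uplink. The only differences are cosmetic: the paper normalizes capacities the other way (uplink capacity $6$, leaf capacity $W$, so yes-instances achieve congestion $1$ rather than $6$) and adds a single unit-weight uplink request from the first path node to the gateway, neither of which changes the argument or the $\theta/6$ gap.
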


\begin{proof}
First, the problem is in $\mathcal{NP}$, since given a candidate embedding, it is easy to verify that its congestion is at most $1$.

Now, let $S = \{s_1, \ldots, s_{3m}\}$ be a multiset of $3m$ integers summing to $mB$, constituting an instance of 3-partition, such that $B/4<s_i<B/2$ for all $i$.  Let $T$ be a tree of height two.  The root/gateway $g$ has $m$ children labeled $S_1, \ldots, S_m$, each of which has $B$ children of its own.  Since 3-partition is strongly $\mathcal{NP}$-complete, we may assume that $B$ is bounded by a polynomial in $m$, so $T$ has polynomial size.  All edges from $g$ to the $S_i$ have capacity 6.  Each node $S_i$ is connected to each of its $B$ children by edges of capacity $W>6$.  

Let $R=R^I\cup R^{II}$ be defined as follows. Let $V=\{v_1,\ldots, v_{mB}\}$ be a set of VMs. For $j=1,\ldots, 3m+1$ let $q_{j}=\sum_{i=0}^{j-1} s_i$, where we set $s_0=0$ for convenience (note that $q_1=0$ and $q_{3m+1}=mB$).  Further, define {\em heavy intervals} as $I_j=\{v_{q_j+1},\ldots, v_{q_{j+1}}\},j=1,\ldots, 3m$, so that $|I_{j}|=s_j$.

 Define chatter bandwidth requests $f_{ij}$ by setting 
\begin{equation*}
f_{ij}=\left\lbrace
\begin{array}{cc}
W,&\text{if $\{i, j\}\subseteq I_k$ for some $k$}\\
1&\text{otherwise.}
\end{array}
\right.
\end{equation*}

Define uplink bandwidths as $f_i=1$ for $i=1$ and $f_i=0$ otherwise. Thus, the requests form a path with the first node on the path connected to the gateway node. The path is partitioned into intervals of length $s_i$, such that the bandwidth requirement between consecutive nodes in each interval is high and the requirement between adjacent nodes on the path that belong to different intervals is low. We refer to the edges of weight $W$ as {\em heavy edges} and the edges of weight 1 as {\em light edges}.

If $S$ has a 3-partition, then the heavy intervals $I_j$ can be divided into $m$ sets $\P_1, \ldots, \P_m$ of 3 intervals each, such that the sum of the lengths within each $\P_i$ is exactly $B$.  We can map all  VMs in $\P_i$ to the children of node $S_i$.  Each edge $(g,S_i)$ carries flow from at most 2 light edges on the border of each of the 3 heavy intervals in $\P_i$, and each edge connecting $S_i$ to its children has load at most $W$, for a congestion of 1.

Now suppose that $S$ does not have a 3-partition. Then since by assumption $B/4<s_i<B/2$, in any feasible allocation of VMs at least one heavy interval $I_k$ must be divided between children of different nodes $S_i$ and $S_j$, and hence at least one heavy edge must congest the edge $(r, S_i)$, yielding congestion at least $W/6$. 

Thus, it is $\mathcal{NP}$-hard to distinguish between instances with an optimal congestion of $1$ and $W/6=\theta/6$, where $\theta$ is the ratio of largest and the smallest weight in the request graph, i.e.\ $\theta=W/1$. 
\end{proof}

\subsection{Unweighted topologies}

\begin{theorem} \label{thm:hardness-2} Let $n$ denote the number of leaves in the host tree. The embedding problem is $\mathcal{NP}$-complete and $\mathcal{NP}$-hard to approximate to within a factor better than $\Omega(n^{1-\e})$, for any $\e\in (0, 1)$, when the set of requests forms an unweighted tree.
\end{theorem}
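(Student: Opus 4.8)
The plan is to reduce from 3-partition again, reusing the height-two host tree of Theorem~\ref{thm:hardness-1} but introducing a scaling parameter $W$ (a polynomial in $m$, fixed at the end) to amplify the gap, and replacing the weighted path request by an unweighted \emph{tree of stars}. Concretely, given a 3-partition instance $S=\{s_1,\dots,s_{3m}\}$ with $B/4<s_i<B/2$, I would build the request as follows: the gateway $g$ is joined by a unit type-\rn{1} edge to each of $3m$ \emph{center} VMs $c_1,\dots,c_{3m}$, and each $c_j$ is joined by unit type-\rn{2} edges to $s_j W$ \emph{satellite} VMs. Every demand is $1$, so the request graph is an unweighted tree rooted at $g$. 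The host tree has root $g$ with $m$ children (bins) $S_1,\dots,S_m$; the edge $(g,S_i)$ has capacity $3$, each $S_i$ has $BW+3$ leaves, and the leaf edges have capacity $BW$. The total number of VMs equals the total number of leaves, $m(BW+3)$, so every leaf is occupied and every bin holds exactly $BW+3$ VMs.

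For the completeness direction, if $S$ has a 3-partition I would place each triple (three clusters whose sizes sum to $B$) entirely inside one bin: its $3$ centers and their $BW$ satellites fill the bin exactly. Then only the three center uplinks cross $(g,S_i)$, giving load $3$ and congestion $1$, while each center leaf edge carries at most $(B/2)W+1\le BW$; hence the embedding has congestion $1$ and $\mathrm{OPT}\le 1$.

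For soundness I would argue via a balance/counting lemma. Call a satellite in bin $i$ \emph{foreign} if its center lies in another bin. Let $B_i=\sum_{c_j\in S_i}s_j$ and let $a_i$ be the number of centers in bin $i$; since the centers partition across the bins, $\sum_i B_i=mB$. If every $B_i=B$ then, because $B/4<s_j<B/2$ forces exactly three clusters to sum to $B$, the center assignment is itself a valid 3-partition; so in a NO instance some bin has $B_i\le B-1$. For that bin $a_i B/4<B_i<B$ gives $a_i\le 3$, and since the bin is full with $BW+3$ VMs while at most $WB_i$ of its satellites can be native, it must contain at least $BW+3-a_i-WB_i\ge W$ foreign satellites. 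Each foreign satellite's edge crosses $(g,S_i)$, so that edge has load $\ge W$ and congestion $\ge W/3$. Thus every embedding of a NO instance has congestion $\ge W/3$.

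Finally I would fix the parameter. With $n=m(BW+3)$ leaves and $B,W$ polynomial in $m$, setting $W=m^{c}$ for a constant $c=c(\e,d)$ (where $B\le m^{d}$) makes $W/3\ge n^{1-\e}$, so distinguishing the two cases---hence approximating congestion to within $n^{1-\e}$---is $\mathcal{NP}$-hard; membership in $\mathcal{NP}$ is immediate by checking congestion. The crux, and the step I would be most careful about, is the soundness counting lemma: the point is that a bin whose centers under-sum $B$ cannot be completed with native satellites and is forced to import $\Omega(W)$ foreign satellites, so that even a near-miss of the 3-partition (rather than an actual split of a single cluster) still overloads one host edge. This is exactly where the bounds $B/4<s_j<B/2$ (to pin $a_i\le 3$) and the exact leaf count (to force every bin full) are used.
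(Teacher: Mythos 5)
Your proposal is correct and follows essentially the same route as the paper: a reduction from 3-partition onto a height-two host tree with $m$ bins, using unweighted stars whose sizes are scaled by a polynomially large parameter so that the small-capacity bin uplinks create a congestion gap of $1$ versus $\Omega(n^{1-\epsilon})$. The only differences are minor: the paper links the star centers into a path (a caterpillar with two gateway uplinks) rather than hanging every center off the gateway, and its soundness count is the dual of yours---it counts satellites of an over-full bin's centers that are forced \emph{outside} that bin (which does not require your exact-fit/bijection property), whereas you count foreign satellites forced \emph{into} a deficient bin, a step that does rely on the exact fit you correctly build into the construction.
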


\begin{proof}
As before, we first note that the problem is in $\mathcal{NP}$, since given a candidate embedding, it is easy to verify  that its congestion is at most $1$.  We use a reduction to 3-partition similar to the reduction to Maximum Quadratic Assignment used in \cite{qa5}.

 Let $S = \{s_1, \ldots, s_{3m}\}$ be a multiset of $3m$ integers summing to $mB$, constituting an instance of 3-partition, such that $B/4<s_i<B/2$ for all $i$.  Let $T$ be a tree of height two.  The root $g$ has $m$ children labeled $S_1, \ldots, S_m$, each of which has $3+B\cdot M$ children of its own, where $M=(5 m B)^{\lceil (1-\e)/\e\rceil}$.  Since 3-partition is strongly $\mathcal{NP}$-complete, we may assume that $B$ is bounded by a polynomial in $m$, so $T$ has polynomial size.  Each node $S_i$ is connected to each of its $3+B\cdot M$ children by links of capacity $B\cdot M+2$, and the root is connected to each of $S_i$ by links of capacity $6$. 

We now define $R=R^I\cup R^{II}$. Let $V=V^1\cup V^2$, where $V^1=\{v_1^1,\ldots, v^1_{m}\}$  and $V^2=\{v^2_1,\ldots, v^2_{mBM}\}$ be a set of VMs organized in a tree as follows. First for $j=1,\ldots, 3m+1$ let $q_{j}=\sum_{i=0}^{j-1} s_i$, where we set $s_0=0$ for convenience. 
We now define bandwidth requirements between VMs in $V$. Each $v^1_i\in V^1$ requires chatter connections of bandwidth 1 to $v^2_{M\cdot (q_i+1)}, v^2_{M\cdot (q_i+1)+1},\ldots, v^2_{M\cdot q_{i+1}}$. Also, $v^1_i$ requires a chatter connection to $v^1_{i-1}$ if $i>1$ and $v^1_{i+1}$ if $i<m$. Finally, both $v^1_1$ and $v^1_m$ require uplink connections to gateway $g$ of bandwidth 1. Thus, the request topology is a tree consisting of stars on $s_i\cdot M$ nodes with centers $v^1_{i}$, for each $i=1,\ldots, m$. Adjacent centers of stars (i.e.\ $v^1_i$ and $v^1_j$ for $|i-j|=1$) are connected to each other.

If $S$ admits a 3-partition, then there exists an embedding of congestion at most $1$: assign the corresponding three centers and their children to the children of $S_j$ for $j=1,\ldots, m$, which is possible since each $S_j$ has exactly $3+B\cdot M$ children.  The congestion is at most $1$ since the edges of $T$ incident on the nodes where the centers are mapped will carry load exactly $B\cdot M+2$ ($B\cdot M$ unit bandwidth connections to the children as well as two connections to neighboring centers or uplink connections), and the edges $(S_j, g)$ will carry at most $2$ units from each of the 3 centers mapped to the children of $S_j$, yielding congestion at most $1$.

Now suppose that $S$ does not admit a 3-partition. Consider the node $S_j\in T$ with the maximum number of centers mapped to its children. Denote these centers by $v^1_{c_1},\ldots, v^1_{c_k}$, where $k>3$.  We then have $\sum_{j=1}^k s_{c_j}\geq B+1$, and hence at least $M$ children of $v^1_{c_1},\ldots, v^1_{c_k}$ are mapped outside the set of children of $S_j$. Hence, at least $M$ edges from the centers $v^1_{c_1},\ldots, v^1_{c_k}$ to these children congest the edge $(S_i, g)$, where $g$ is the root of $T$. Thus, the congestion is at least $M/6$. The number of vertices in the tree $T$ is $n=1+m(3+B\cdot M)\leq 1+(3+B)m\cdot M\leq (5mB)\cdot M\leq M^{\e/(1-\e)+1}=M^{1/(1-\e)}$. Hence, the congestion is at least $M/6\geq n^{1-\e}/6$.

We have shown that it is $\mathcal{NP}$-hard to distinguish between instances of the problem where the minimum congestion is $1$ and $\Omega(n^{1-\e})$, thus completing the proof.
\end{proof}

\section{Algorithm}\label{sec:algo} 

Next we present our algorithmic results and show that despite the $\mathcal{NP}$-completeness results in the previous section, many practical instances can still be solved efficiently.

\subsection{Creation of binary tree}

We first convert the tree $N$ into a binary tree $T$ with not many additional nodes in a way that preserves the congestion of all solutions.  This step is purely for convenience in simplifying the presentation of the algorithm that follows.
We simply replace each degree $d$ node with a complete binary tree on $d$ nodes.  Algorithm \ref{binary_tree_alg} describes the procedure Create-Binary-Tree($N,g$) more formally.

\begin{algorithm}
\caption{Create-Binary-Tree($N,g$)}
\begin{algorithmic}[1]
 \label{binary_tree_alg}
\FORALL { $v \in N$, $\degree(v) > 3$} 
	\STATE Let $u_1,\ldots, u_d$ be the children of $v$, and $e_1,\ldots, e_d$ the edges connecting $v$ to $u_i$ 
	\STATE Replace $e_1,\ldots,e_d$ with a binary tree rooted at $v$ with leaves $u_1,\ldots, u_d$ 
	\STATE Set the capacity of parent edges of $u_i$ to be $c_{e_i}$ and that of all other new edges to be $\infty$ 
\ENDFOR
\end{algorithmic}
\end{algorithm}

Let $T$ be the resulting binary tree.  We first show that the congestion of embedding into $T$ and $N$ is equal:

\begin{lemma}\label{lem:equivalence}
The congestion of embedding any request graph $G_R$ into a tree $N$ rooted at node $g$ is equal to the congestion of embedding $G_R$ into the binary tree $T$ constructed by the procedure Create-Binary-Tree($N,g$)
\end{lemma}
\begin{proof}
Consider any embedding $\pi$ of $G_R$ into $N$. 
Since the auxiliary nodes inserted are not leaves, $\pi$ defines an embedding of $G_R$ into $T$ as well.  Let $u,v \in N \cap T$ and $P_{u,v}^N$, $P_{u,v}^T$ be the edges on the unique paths between $u$ and $v$ in $N$ and $T$.  Observe that $P_{u,v}^N \subseteq P^{T}_{u,v}$, and that all edges in $P^{T}_{u,v} \setminus P_{u,v}^N$ have infinite capacity and contribute nothing to the congestion.
Hence the congestion of embedding in $N$ and $T$ is equal. 
\end{proof}

Next, we show that $T$ is not much bigger than $N$:

\begin{lemma} 
\label{lem:size}
The number of nodes is $T$ is at most $2n$ and the height of $T$ is $O(H\log \Delta)$ where $\Delta$ is maximum degree in $N$ and $H$ denotes the height of $N$.
\end{lemma}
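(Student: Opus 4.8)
The plan is to prove the two bounds separately, in each case reducing to a local, per-node accounting of what Create-Binary-Tree does. Throughout, let $d_v$ denote the number of children of a node $v$ in $N$, and recall that the procedure only rewires nodes $v$ with $\degree(v) > 3$, replacing the $d_v$ edges from $v$ to its children by a complete binary tree whose root is the reused node $v$ and whose $d_v$ leaves are the original children $u_1, \ldots, u_{d_v}$.

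For the node count, I would use the fact that a full binary tree with $d_v$ leaves (every internal node having two children) contains exactly $d_v - 1$ internal nodes, one of which is $v$ itself, so the rewiring of $v$ introduces at most $d_v - 2$ genuinely new auxiliary nodes. Summing the new nodes over all processed $v$ gives at most $\sum_{v : d_v > 3} (d_v - 2) \le \sum_{v} d_v$, and since every non-root node of $N$ is a child of exactly one node, $\sum_{v} d_v$ equals the number of edges of $N$, namely $n - 1$. Hence $T$ has at most $n + (n - 1) = 2n - 1 \le 2n$ nodes.

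For the height, I would track what happens to a single root-to-leaf path of $N$ under the construction. An edge $(v, u_i)$ incident to an expanded node $v$ is replaced in $T$ by the path from the root of the binary tree at $v$ down to its leaf $u_i$, whose length is at most the height $\lceil \log_2 d_v \rceil \le \lceil \log_2 \Delta \rceil$ of that binary tree. Concatenating these replacements along a root-to-leaf path of $N$, which has at most $H$ edges, yields a path in $T$ of length at most $H \cdot \lceil \log_2 \Delta \rceil = O(H \log \Delta)$; taking the maximum over all leaves bounds the height of $T$.

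The two summations are routine; the one place that needs care is the independence of the expansions, namely verifying that rewiring one high-degree node neither alters the child count of another unprocessed node nor produces new nodes of degree exceeding $3$ that would themselves require processing. Since every auxiliary node created has exactly two children plus one parent (degree $3$) and the subtrees hanging below the original children $u_i$ are left untouched, the expansions act on disjoint local regions and the per-node bounds compose additively. Once this is established, both claims follow directly.
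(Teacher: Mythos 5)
Your proof is correct and follows essentially the same approach as the paper's: a per-node accounting of the auxiliary nodes created (summing child counts to get at most $n-1$ new nodes, hence at most $2n$ total) and a per-edge stretch factor of $\lceil \log_2 \Delta \rceil$ for the height. Your version is more careful than the paper's terse argument—in particular the check that auxiliary nodes have degree $3$ and thus never trigger further expansion is a detail the paper leaves implicit—but the underlying decomposition is identical.
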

\begin{proof}
We replace each node $v$ of degree $d$, with a complete binary tree on $d$ leaves, which has at most $2d$ nodes. Therefore, the number of nodes in $T$ is at most $2n$. Also by this replacement, we stretch sub-trees of height $1$ by a factor at most $\lceil \log \Delta \rceil$ which shows that the height of $T$ is $O(H\log \Delta)$.
\end{proof}

\subsection {Minimum congestion of embedding requests in a binary tree}

\renewcommand{\part}{\textrm{Part}}
\newcommand{\flow}{\textrm{Flow}}

Now we present our primary algorithmic result and show that if the request graph is small---which is true in many practical instances---then the optimal embedding can be found efficiently.
Before describing the algorithm, we introduce some notation. For any node $u \in T$ we use the symbol $e_u$ to denote the link joining the parent of node $u$ to $u$ and $T_u$ to denote the subtree of $T$ rooted at $u$.  If $u$ is not a leaf, we refer to the ``left'' and ``right'' children of $u$ in $T$ as $u_l$ and $u_r$ respectively.  In this section we assume that the tree $T$ rooted at $g$ is binary and of height $H$. Let $L^j$ denote the set of vertices in $T$ at distance $j$ from $g$, so $L^0 = \{g\}$, while $L^H$ denotes the leaves at the lowest level.  


The algorithm is straightforward dynamic programming.  Starting at the leaves of $T$, and moving upwards towards the root, for each node $u \in T$ and set $S \subseteq V$ we calculate the congestion of the optimal embedding of the VMs in $S$ into $T_u$ using the congestion of embeddings into $u$'s children.  Let $\flow[S]$ denote sum of the bandwidth requirements crossing the cut $(S,V\cup\{g\}\setminus S)$ in $G_R$, and
$\cong[u,S]$ denote the optimal congestion of the edges of $T_u$ when embedding the subgraph of $G_R$ spanned by $S$ into $T_u$.
Then $\cong[u,S]$ satisfies the recurrence
\[
\cong[u,S] = \min_{S_l \subseteq S} \max\left\{ 
\cong[u_l,S_l],\cong[u_r,S\setminus S_l], 
 \flow[S_l]/c_{e_l},\flow[S\setminus S_l]/c_{e_r} \right\}
\]
That is, it is the minimum over all partitions $(S_l,S\setminus S_l)$ of $S$ of the congestion of embedding $S_l$ into $T_{u_l}$ and $S\setminus S_l$ into $T_{u_r}$.  The terms $\flow[S_l]/c_{e_l}$ and $\flow[S\setminus S_l]/c_{e_r}$  are the congestion on the edges connecting $u$ to its children.  The base case is when $u$ is a leaf, in which case
\[
\cong[u,S] = \begin{cases}
0 & \textnormal{if } |S| \le 1 \\
\infty & \textnormal{if } |S| > 1
\end{cases}
\]
assuming for simplicity that each server can support at most a single VM.  By changing this equation, we can easily allow a server $v \in T$ to support up to $n_v$ VMs.

After computing these recurrences, the  algorithm outputs $\cong[g,V]$. Note that $L^0 = \{g\}$ and that it suffices to compute $\cong[g,V]$ (i.e., $\cong[g,S]$ for subsets $S \subset V$ is not needed).  Algorithm \ref{alg:cong} shows the procedure in more detail.

\renewcommand{\algorithmicrequire}{\textbf{Input:}}
\renewcommand{\algorithmicensure}{\textbf{Output:}}

\begin{algorithm}
\caption{Minimum Congestion}
\label{alg:cong}
\begin{algorithmic}[1]
\REQUIRE Binary tree $T$ rooted at $g$, request graph $G_R$ 
\ENSURE Minimum congestion in embedding $V$ into $T$ such that requirements $R$ are satisfied 
\FORALL {$S \subseteq V$}
	\STATE  $\begin{aligned} 
	\flow[S] \gets \sum_{(v,g) \in R^{\rn{1}}, v \in S} f_{(v,g)} + \sum_{(u,v) \in R^{\rn{2}}, u\in S, v\notin S} f_{(u,v)}
	\end{aligned}$
\ENDFOR

\FORALL {leaves $u \in L$, and $S \subseteq V$ }
\STATE $\cong[u,S] \gets 0 \text{ if } |S| \leq1, \infty \text{ otherwise}$ 
\ENDFOR
\FOR {$j = H,H-1,\ldots, 0$}
	\FORALL {$ u \in L^j$,  $u$ not a leaf}
		\STATE $t_{\min} \gets \infty$
		\FORALL {$S \subseteq V$} 
		\FORALL { $S_l \subseteq S$}
		\STATE
				$t \gets \max 
				\left\{
				 \cong[u_l, S_l], 
				\cong[u_r, S \setminus S_l], 
				\flow[S_l]/c_{e_l},  
				\flow[S\setminus S_l]/c_{e_r} 
				\right\}$

		\IF {$t < t_{\min}$}
			\STATE $t_{\min} \gets t$
			\STATE $S_{\min} \gets S_l$
		\ENDIF				
		\ENDFOR
		\STATE $\cong[v,S] \gets t_{\min}$
		\STATE $\part[u,S] \gets (S_{\min}, S \backslash S_{\min})$
		\ENDFOR
	\ENDFOR
\ENDFOR
\RETURN $\cong[g,V]$
\end{algorithmic}
\end{algorithm}

When we update $\cong[u,S]$ we also store the partition $(S_l,S\setminus S_l)$ that realizes this optimal congestion in a partition table $\part[u,S]$.  After the execution of the algorithm, we can recover the optimal embedding by working backwards in the standard fashion for dynamic programs: starting at $g$ we read the optimal partition $(V_1, V\setminus V_1)$ from $\part[g,V]$.  Now we find the optimal partitions of $V_1$ with root $g_l$ and $V\setminus V_1$ with root $g_r$, and so on.


Now we analyze the correctness and runtime:

\begin{lemma}\label{lem:correctness}
Algorithm \ref{alg:cong} finds the minimum congestion of embedding request $G_R$ in a tree network $N$.
\end{lemma}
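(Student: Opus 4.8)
The plan is to establish correctness via structural induction on the tree $T$, proving by induction on the subtree that the quantity $\cong[u,S]$ computed by the recurrence equals the true optimal congestion of embedding the induced subgraph $G_R[S]$ into $T_u$, subject to the boundary constraint that all VMs in $S$ live inside $T_u$. Combining this with Lemma~\ref{lem:equivalence}, which tells us the congestion in $T$ matches the congestion in the original network $N$, will immediately give that the final output $\cong[g,V]$ is the minimum congestion of embedding $G_R$ in $N$.

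First I would carefully state the inductive invariant. For a node $u$ and a set $S \subseteq V$, define $\mathrm{OPT}[u,S]$ to be the minimum over all injective embeddings $\pi$ of $S$ into the leaves of $T_u$ of the maximum congestion over edges of $T_u$, where the congestion of an edge counts every request edge of $G_R$ whose path crosses it. The key observation to formalize is that for any such $\pi$, the total flow crossing the edge $e_u$ (the parent edge of $u$) is exactly $\flow[S]$: an edge $(v,g)\in R^{\rn{1}}$ with $v\in S$ must route to the gateway outside $T_u$, and an edge $(u',v)\in R^{\rn{2}}$ with exactly one endpoint in $S$ must also cross $e_u$, while edges with both endpoints in $S$ stay inside and contribute nothing to $e_u$. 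This is precisely the definition of $\flow[S]$ in the algorithm, so the congestion contributed by $e_u$ is determined by $S$ alone and is independent of how $S$ is embedded within $T_u$.

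The base case is immediate: a leaf can hold at most one VM, matching the stated base case of the recurrence. For the inductive step at an internal node $u$ with children $u_l, u_r$, I would argue both inequalities. For the upper bound, any partition $(S_l, S\setminus S_l)$ together with optimal sub-embeddings yields a valid embedding whose congestion is the maximum of the four terms (the two child congestions and the two parent-edge congestions $\flow[S_l]/c_{e_l}$ and $\flow[S\setminus S_l]/c_{e_r}$), since the edge sets of $T_{u_l}$, $T_{u_r}$, and $\{e_l, e_r\}$ partition the edges of $T_u$. For the lower bound, any optimal embedding of $S$ into $T_u$ induces a partition $S_l = \pi^{-1}(\text{leaves of } T_{u_l})$, and its congestion is at least the maximum of the four corresponding terms; hence the minimum over partitions in the recurrence cannot exceed the true optimum. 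Since the algorithm minimizes exactly over all $S_l \subseteq S$, these two directions show the recurrence computes $\mathrm{OPT}[u,S]$.

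The main obstacle, and the step I would be most careful about, is verifying the edge-disjointness claim that justifies writing the congestion of $T_u$ as a clean maximum of the child congestions and the two parent-edge loads. The subtlety is that request edges crossing between $S_l$ and $S\setminus S_l$ traverse edge $e_l$, then go up to $u$, then down through $e_r$, so they load \emph{both} child-parent edges — but I must confirm they do not illegitimately inflate the internal congestion of $T_{u_l}$ or $T_{u_r}$, and that the $\flow$ accounting at the two child edges correctly captures all crossing flow. I would make this rigorous by noting that the congestion of edges strictly inside $T_{u_l}$ depends only on request edges with both relevant endpoints embedded within $T_{u_l}$, which is exactly the subproblem $\cong[u_l, S_l]$; cross-partition flow and flow to the gateway is accounted for solely through $\flow[S_l]/c_{e_l}$ and symmetrically for the right child. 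Once this bookkeeping is pinned down, the induction closes cleanly.
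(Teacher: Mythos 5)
Your overall architecture --- induction on $T$ with the invariant that $\cong[u,S]$ equals the optimal congestion of embedding $S$ into $T_u$, combined with Lemma~\ref{lem:equivalence} --- is a correct fleshing-out of the paper's (much terser) optimal-substructure argument, and your first key observation (that the flow crossing $e_u$ equals $\flow[S]$ for every embedding placing exactly the set $S$ inside $T_u$) is right. However, the step you yourself single out as the one requiring care is resolved with a false claim, and as written the induction does not close. You assert that ``the congestion of edges strictly inside $T_{u_l}$ depends only on request edges with both relevant endpoints embedded within $T_{u_l}$'' and that cross-partition and gateway flow ``is accounted for solely through $\flow[S_l]/c_{e_l}$.'' This is not true: a request edge $(a,b)$ with $\pi(a)$ a leaf of $T_{u_l}$ and $b$ embedded outside $T_{u_l}$ (or $b=g$) is routed along the path $\pi(a)\to u_l\to u\to\cdots$, so it loads every edge between $\pi(a)$ and $u_l$, i.e.\ edges strictly inside $T_{u_l}$, not only $e_l$. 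Concretely, suppose $u_l$ has two leaf children $a$ and $b$, the set $S_l=\{v_1\}$ is placed at $a$, and $v_1$ has a unit uplink demand to $g$. Then the edge $(u_l,a)$ carries one unit of flow; under your ``internal request edges only'' semantics $\mathrm{OPT}[u_l,S_l]=0$, whereas the algorithm computes $\cong[u_l,\{v_1\}]=1/c_{(u_l,a)}$, so the claimed equality between the DP table and your $\mathrm{OPT}$ already fails one level below the root, and the inductive step cannot be completed with that invariant.

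The repair is to state the invariant with the subproblem semantics that the algorithm actually implements: $\cong[u,S]$ should equal the minimum over embeddings $\pi$ of $S$ into the leaves of $T_u$ of $\max_{e_w}\flow[S_w]/c_{e_w}$, where $e_w$ ranges over the edges of $T_u$ and $S_w\subseteq S$ is the set of VMs placed in the subtree below $e_w$; equivalently, every request edge with exactly one endpoint in $S$ is routed from its embedded endpoint up to the root $u$ of $T_u$, its continuation being charged to ancestor edges. This matches the algorithm precisely because $\flow[\cdot]$ is defined with respect to the cut against all of $V\cup\{g\}$, not against $S$: at every deeper level, the term $\flow[S_w]/c_{e_w}$ already includes demands from $S_w$ to VMs outside $S_l$ and to the gateway. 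With this corrected invariant, your upper- and lower-bound directions go through essentially verbatim, since the edges of $T_u$ partition into those of $T_{u_l}$, those of $T_{u_r}$, and $\{e_l,e_r\}$, and the flow on each edge depends only on the subset of VMs placed below it. For comparison, the paper's own proof skips this bookkeeping entirely and simply appeals to optimality over all partitions at each node; your level of rigor is the right one to aim for, but the invariant must be the cut-based one above, not the internal-edges-only one.
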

\begin{proof}
By Lemma \ref{lem:equivalence}, optimizing the congestion on $N$ is equivalent to optimizing it on $T$.
The optimal congestion of an embedding restricted to $T_u$ requires using an optimal partition into subsets embedded into left and right subtrees of $T_u$, and Algorithm \ref{alg:cong} recursively computes the optimal embedding for all possible partitions of the VMs, thus retrieving the congestion of the optimal embedding. 
\end{proof}

\begin{lemma}
Algorithm \ref{alg:cong} has running time $O(3^kn)$. 
\end{lemma}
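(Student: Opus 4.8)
The plan is to charge the total running time of Algorithm~\ref{alg:cong} to its three phases---precomputing $\flow[S]$, initializing the leaf table $\cong[u,S]$, and performing the main dynamic-programming sweep---and to show that the sweep dominates at $O(3^k n)$. By Lemma~\ref{lem:size} the binary tree $T$ has $O(n)$ nodes, so it suffices to bound the per-node work and multiply.

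First I would dispose of the two cheap phases. Computing $\flow[S]$ for a fixed $S$ amounts to summing the bandwidths of the type-\rn{1} edges incident to $S$ together with the type-\rn{2} edges crossing the cut $(S, V \cup \{g\} \setminus S)$ in $G_R$, which costs $O(k^2)$; over all $2^k$ subsets this is $O(2^k k^2)$. Likewise, initializing $\cong[u,S]$ at the $O(n)$ leaves costs $O(2^k n)$. Since $k^2 (2/3)^k \to 0$, we have $2^k k^2 = o(3^k)$, so both quantities are dominated by $O(3^k n)$ and may be ignored in the final bound.

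The heart of the argument is a counting identity for the main sweep. For each non-leaf node $u$, the algorithm iterates over every subset $S \subseteq V$ and, for each such $S$, over every subset $S_l \subseteq S$; for each pair $(S, S_l)$ it performs only constant work, namely four table lookups ($\cong[u_l,S_l]$, $\cong[u_r,S\setminus S_l]$, and the two precomputed flow terms), a maximum of four numbers, and a comparison-and-update. The total number of such pairs, grouping the subsets $S$ by cardinality $j=|S|$, is
\[
\sum_{S \subseteq V} 2^{|S|} = \sum_{j=0}^{k} \binom{k}{j} 2^j = (1+2)^k = 3^k
\]
by the binomial theorem. Hence each node contributes $O(3^k)$ work, and multiplying by the $O(n)$ nodes gives the claimed $O(3^k n)$ bound.

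The only point requiring care---and the step I would be most careful to justify---is that each per-pair operation is genuinely $O(1)$. This holds once subsets of $V$ are encoded as $k$-bit masks, so that the arrays $\flow[\cdot]$ and $\cong[\cdot,\cdot]$ admit $O(1)$ indexing and the inner enumeration of $S_l$ can be realized as an iteration over the submasks of the bitmask for $S$ (a standard trick whose cumulative cost over all $S$ is precisely the $3^k$ counted above). With this representation no step inside the innermost loop exceeds constant time, so the sweep runs in $O(3^k n)$ and, combining with the dominated preprocessing phases, so does the whole algorithm.
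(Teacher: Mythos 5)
Your proof is correct and takes essentially the same route as the paper's: it uses the identical counting identity $\sum_{j=0}^{k}\binom{k}{j}2^{j}=3^{k}$ to bound the per-node work in the dynamic-programming sweep, bounds the $O(k^{2}2^{k})$ flow precomputation by $O(3^{k})$ in the same way, and multiplies by the $O(n)$ nodes of $T$. The added remarks about bitmask indexing and leaf initialization are fine but only make explicit what the paper leaves implicit.
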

\begin{proof}
We first calculate $\flow[S]$ for every set $S \subseteq V$.  There are $2^k$ such sets, and each requires summing over at most $k^2$ edges in $R$, for a runtime of $O(k^22^k)$, which is $O(3^k)$ for large enough $k$.
In the main loop, for each $u$ in $T$ we compute $\cong[u,S]$ for all sets $S \subseteq V$.  If $|S| = i$, computing $\cong[u,S]$ requires looking at all $2^i$ subsets of $S$ and doing $O(1)$ work for each one.  Summing over all $O(n)$ nodes and all sets $S$, this requires $O(n)O(\sum_{i=0}^k {k \choose i} 2^i) = O(3^kn)$ work total.
\end{proof}

\subsection{Other Objective Functions and Request Models} \label{subsec:virtual_cluster}

The basic form of our algorithm is not specific to congestion, and the recurrence in Algorithm \ref{alg:cong} can easily be modified to optimize for any objective function for which we can write a similar recurrence.  For instance, if each edge in $T$ has a delay and bandwidth capacity, we can minimize the average or maximum latency between VMs subject to satisfying bandwidth constraints (with a slightly more complex recurrence).

In practice it may not be desirable to allow request graphs to have arbitrary topologies and edge weights.  If a request graph is sufficiently simple and uniform, then the complexity results of Section \ref{sec:hardness} no longer apply, and we no longer need to consider all $2^k$ cuts of $G_R$ at each node.  For instance, if $G_R$ is a clique with equal bandwidth on all edges, then the congestion of embedding a set of VMs $S$ into $T_u$ is dependent only on the size of $S$, so we only need to consider $k+1$ subproblems for each node in $T$.

Ballani et al.\ \cite{sig11} describe a {\em virtual cluster} request model, which
consists of requests of the form $\langle k,B\rangle$ representing $k$ VMs each connected to a virtual switch with a link of bandwidth $B$. Such a request $\langle k,B\rangle$ is similar (but not identical) to a request consisting of a clique on $k$ VMs and a bandwidth guarantee of $B/(k-1)$ on each edge of the clique in our setting. 
We show that when restricted to {\em virtual cluster} requests, a modified version of Algorithm \ref{alg:cong} finds the minimum congestion embedding in time $O(nk^2)$. For the sake of completeness and comparison with their work, we present Algorithm \ref{alg:virtual-cluster}. Similar adjustments could be made to handle other request models for which considering all $2^k$ cuts of the request graph is unnecessary.

\renewcommand{\algorithmicrequire}{\textbf{Input:}}
\renewcommand{\algorithmicensure}{\textbf{Output:}}

\begin{algorithm}
\caption{Min Congestion Embedding for $\langle k,B\rangle$}
\label{alg:virtual-cluster}
\begin{algorithmic}[1]
\FORALL {leaves $u \in L$, and $i \in 0,\ldots,k$ }
\STATE $\cong[u,i] \gets 0 \text{ if } z \le 1, \infty \text{ otherwise}$ 
\ENDFOR
\FOR {$j = H,H-1,\ldots, 1$}
        \FORALL {$ u \in L^j$}
                \STATE $t_{\min} \gets \infty$
                \FOR {$z = 0, \ldots, k$} 
                \FOR { $i = 0, \ldots, z$}
                \STATE $f_l \gets i\cdot(k-i)\cdot B/(k-1)$
                \STATE $ f_r \gets (z-i) \cdot (k-z+i) \cdot B/(k-1)$                
                \STATE 
                                $  t \gets \max
  				\left\{
                                \cong[u_1, i], 
                                \cong[u_2, z-i], 
   		  		f_l/c_{e_l}, 
                                f_r/c_{e_r}
                                \right\}$

		\IF {$t < t_{\min}$}
                        \STATE $t_{\min} \gets t$
   			\STATE $i_{\min} \gets i$
                \ENDIF 
                \ENDFOR
                \STATE $\cong[u,z] \gets t_{\min}$
 		\STATE $\part[u,z] \gets (i_{\min}, z-i_{\min})$
                \ENDFOR
        \ENDFOR
\ENDFOR
\RETURN $\cong[g,k]$
\end{algorithmic}
\end{algorithm}

The correctness of Algorithm \ref{alg:virtual-cluster} can be inferred from the correctness of Algorithm \ref{alg:cong} by noting that under the {\em virtual cluster} request model, all subsets of equal size embed in a subtree with same congestion, i.e.\ for any $S_1, S_2 \subseteq V$ such that $|S_1| = |S_2|$, we have $\cong[u,S_1] = \cong[u, S_2]$ for all $u \in T$. For every node $u$ and for all $z \in 0 \ldots k$, Algorithm \ref{alg:virtual-cluster} calculates $\cong[u,z]$ by optimizing over $z+1$ possible splits of the $z$ VMs among its children. A simple recursive calculation shows that this computation has complexity $\sum_{z = 0}^k (z+1) = O(k^2) $. This shows that the running time of Algorithm \ref{alg:virtual-cluster} is $O(nk^2)$.  



\section{Simulations}\label{sec:simu}
In this section we present simulations which verify the correctness and scaling properties of Algorithm \ref{alg:cong} in both the pairwise bandwidth guarantees model, as well as {\em virtual cluster} request model \ref{subsec:virtual_cluster}.  
We perform all simulations using an unoptimized python implementation of Algorithm \ref{alg:cong} on Intel Sandy Bridge Quad Core machine having $4$ GB of RAM using the networkx graph package \cite{networkx} to simulate the physical network. 
\begin{figure*}[!t]
\centerline{\subfloat[Linear variation with $n$] {
	\includegraphics[width=2.5in]{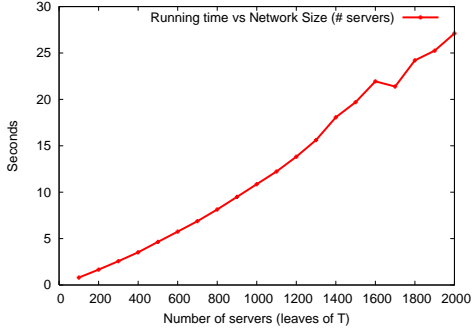}
	}
\hfil
\subfloat[Exponential variation with $k$] {
	\includegraphics[width=2.5in]{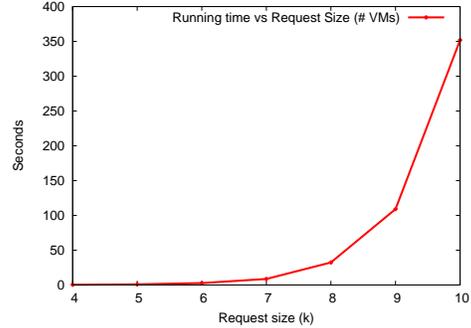}
	}
}
\caption{Pairwise bandwidth guarantees between all VMs: Dependence of the running time of Algorithm \ref{alg:cong} with (a) $n$, size of the network when $k = 5$, (b) and with $k$, size of the requests when $n = 100$. }
\label{fig:scale}
\end{figure*}

\subsection{Network configuration} 

In order to test our algorithm on a realistic networks, we simulate a typical three tier data center network \cite{alfares} with servers housed in racks which are connected to a Top-Of-Rack (TOR) switch (tier I). The TOR switches connect the racks to other parts of the network via Aggregation Switches (AS, tier II).  The AS switches have uplinks connecting them to the Core Switch (CS, tier III).  We assume that TOR's are connected to the servers with $10$ GBps links while the uplinks from TORs to the AS's are $40$ GBps and from the AS's to the CS's are $100$ GBps. We construct a tree topology over these elements, recalling that common routing protocols used in practice employ a spanning tree of the physical network. We model existing traffic in the data center network using random residual capacities for each link. We choose the residual capacity for edge $e$ independently of all other edges and uniformly at random from $[0,c(e)]$ where $c(e)$ denotes the bandwidth capacity of edge $e$. The choice of random residual link capacities is forced on us due to lack of models describing realistic network flows in a data center. We note that Algorithm \ref{alg:cong} finds the optimal congestion embedding for any of the distribution of residual capacities on the network links and any choices of bandwidth capacities of the links.  

\subsection{Linear scan over all possible VM allocations}
By implementing a linear scan over all possible VM allocations in the network, we verify the correctness of Algorithm \ref{alg:cong} by finding the allocation that minimizes congestion. Note that this implementation requires scanning ${n \choose k} \cdot k!  = O(n^k)$ feasible VM allocations where $n$ denotes the number of servers in the network and $k$ denotes the request size. Hence we choose small network and request sizes $n \in \{50,75,100\}$ and $k = 4$ and verify correctness of the algorithm for different request topologies and randomly generated residual capacities on the network links. We observe that this procedure requires hours or even days to finish even for very small network and request sizes like $n = 125$ and $k = 4$ as seen in Table \ref{table:brute-force} and hence is infeasible for modern data centers containing hundreds of thousands of servers. In contrast, Algorithm \ref{alg:cong} has complexity $O(3^kn)$, which is linear in the network size $n$, and as shown in the next sub section, finishes in order of seconds on our simulation setup for small values of $k$. 

\begin{table}[!t]
\caption{Linear scan for VM allocation: run time} 
\label{table:brute-force}
\centering
\begin{tabular}{c|c|c}
n & k & Time (hours) \\ \hline
50 & 4 & 2.2  \\
75 & 4 & 18.8 \\ 
100 & 4 & 80 
\end{tabular}
\end{table}

\subsection{Pairwise bandwidth requirements} 
Next, we verify the scaling properties of Algorithm \ref{alg:cong} with respect to parameters $n$ and $k$. First, we fix a request of size $k = 5$, and plot the running time for increasing values of $n$, the number of servers, from $n = 200$ to $n = 2000$ in Figure \ref{fig:scale}(a) which illustrates the linear variation of run time with respect to $n$. Next, we fix the network size to $n = 100$ and plot the run time for {\em path requests} with lengths from $k = 4$ to $k = 10$ in Figure \ref{fig:scale}. This figure shows that the run time increases exponential with respect to $k$.  
 
\subsection{Virtual Cluster Request Model}
We also verify the scaling properties when the requests are restricted to the virtual cluster model \cite{sig11}. 
For a fixed request $\langle k,B\rangle$ where $k = 100$ and $B = 100$Mbps, we plot the running time for increasing values of $n$, from $n = 200$ to $n = 2000$ in Figure \ref{fig:scale:vcluster}(a) which illustrates the linear variation of run time with $n$. 
Next, we fix the network size to $n = 1000$ and plot the run time for $k$ in the range $10$ to $100$. These results show that for {\em virtual cluster} requests, our algorithm finds the minimum congestion embedding in time $O(nk^2)$.  

\begin{figure*}[!t]
\centerline{\subfloat[Linear variation with $n$] {
	\includegraphics[width=2.5in]{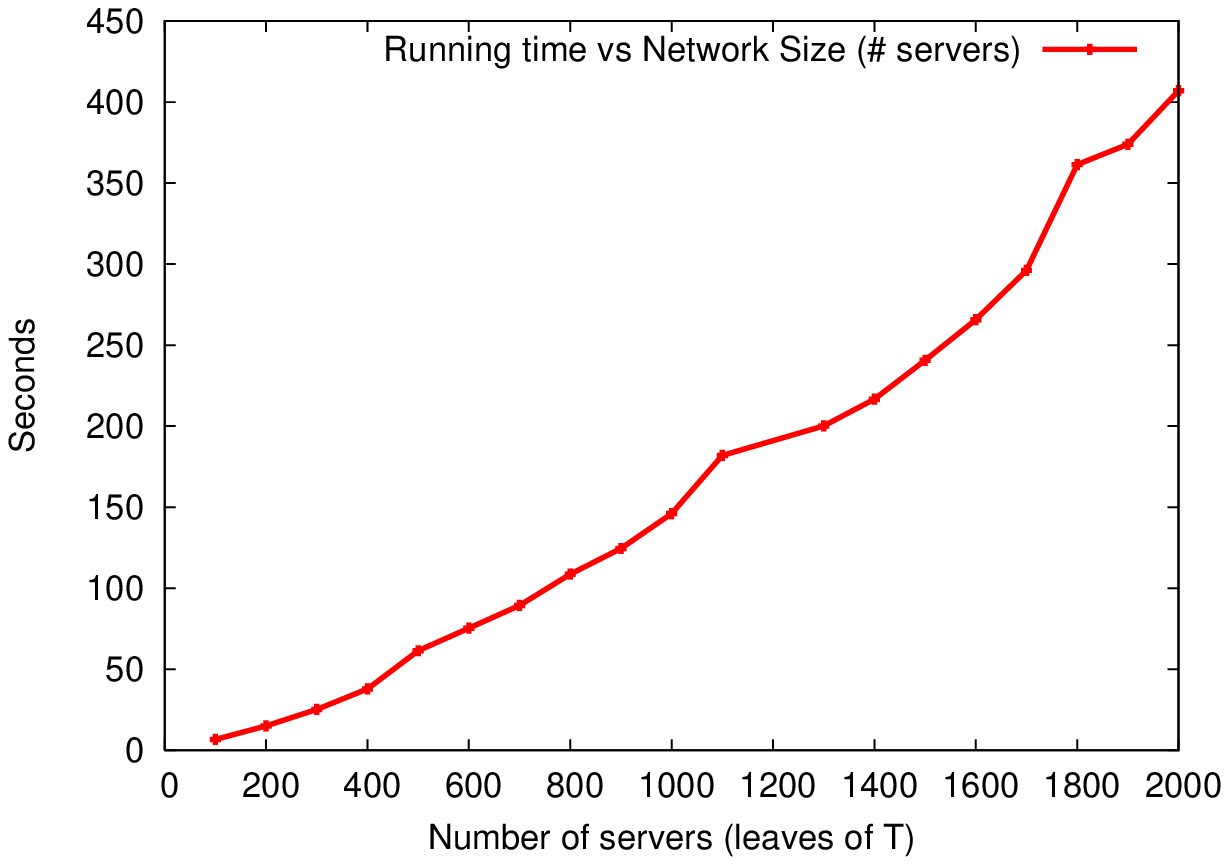}
	}
\hfil
\subfloat[Quadratic variation with $k$] {
	\includegraphics[width=2.5in]{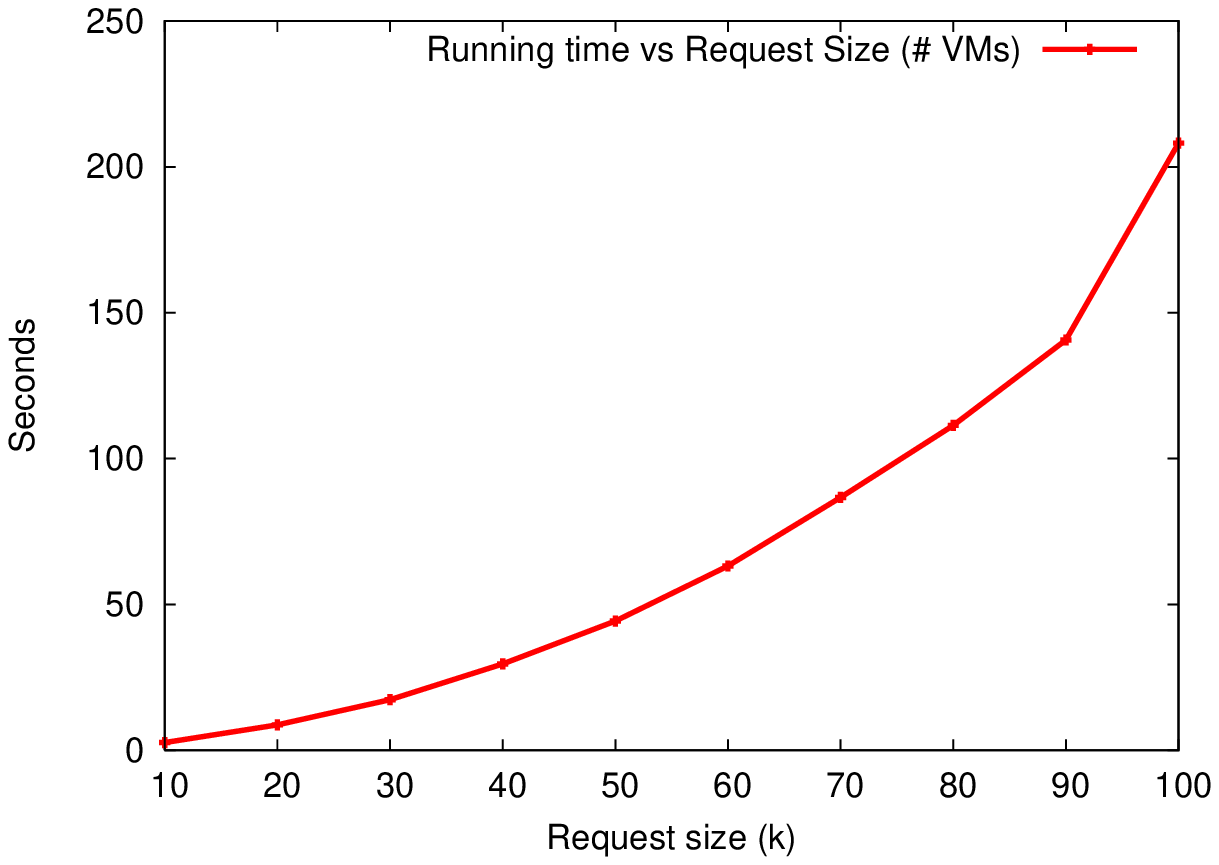}
	}
}
\caption{{\em Virtual Cluster} request model: Dependence of the running time of Algorithm \ref{alg:virtual-cluster} for the {\em virtual cluster} request model with (a) $n$, size of the network when $k = 10$, (b) and with $k$, size of the requests when $n = 1000$. }
\label{fig:scale:vcluster}
\end{figure*}

As mentioned before, a number of heuristic approaches have been formulated to perform VM allocation. However, lack of models of the existing network flows inside a data center, especially in the context of enterprise workloads, hinders the evaluation and comparison of their performance in realistic settings. In particular, we observe that by congesting particular edges in the network, it is possible to make the greedy heuristics for VM mapping perform significantly worse than the optimal embedding (output by Algorithm \ref{alg:cong}). However, a thorough comparison with heuristics requires models of flow in a data center serving enterprise requests, and we leave this to future work. 

\section{Conlusion and Future Work}
\label{sec:conclusion}

In this paper we study the problem of allocating a graph request within a tree topology and we present a $O(3^kn)$ dynamic programming algorithm that embeds the resource request graph of size $k$ into the data center topology (tree) of size $n$ to minimize congestion. We believe this is useful in enterprise workloads when the request size $k$ is small. For clique requests, we present a $O(n^2k)$ dynamic programming algorithm to allocate clusters of size $k$ in a tree of size $n$ for minimizing congestion, which could be useful for MapReduce-like workloads. We believe that it would also be possible to extend our results to hybrid workloads involving tiers of VMs, with both inter-tier as well as intra-tier bandwidth guarantees. We also provide hardness results and show that the problem of finding minimum congestion embedding in a network remains in $\mathcal{NP}$-hard even under the restriction to tree network. We focus on minimizing congestion as our objective function, but we believe our methods are applicable to a wider class of metrics and objective functions.

\bibliographystyle{alpha}
\bibliography{DP_algorithm}

\newcommand{\etalchar}[1]{$^{#1}$}
\begin{thebibliography}{hsinfmc10}

\bibitem[AFLV08]{alfares}
M.~Al-Fares, A.~Loukissas, and A.~Vahdat.
\newblock A scalable, commodity data center network architecture.
\newblock {\em SIGCOMM}, 2008.

\bibitem[BCKR11]{sig11}
H.~Ballani, P.~Costa, T.~Karagiannis, and A.~Rowstron.
\newblock Towards predictable datacenter networks.
\newblock {\em SIGCOMM}, 2011.

\bibitem[CRB09]{vne}
N.~M. M.~K. Chowdhury, M.~R. Rahman, and R.~Boutaba.
\newblock Virtual network embedding with coordinated node and link mapping.
\newblock {\em INFOCOM}, 2009.

\bibitem[DG08]{mapreduce}
Jeffrey Dean and Sanjay Ghemawat.
\newblock Mapreduce: simplified data processing on large clusters.
\newblock {\em Commun. ACM}, 51:107--113, January 2008.

\bibitem[DGG{\etalchar{+}}99]{duffield99}
N.~G. Duffield, P.~Goyal, A.~Greenberg, P.~Mishra, K.~K. Ramakrishnan, and J.~E
  Van Der~Merwe.
\newblock A flexible model for resource management in virtual private networks.
\newblock In {\em in Proc. ACM SIGCOMM}, pages 95--108, 1999.

\bibitem[FA06]{fan}
J.~Fan and M.~Ammar.
\newblock Dynamic topology configuration in service overlay networks - a study
  of reconfiguration policies.
\newblock {\em INFOCOM}, 2006.

\bibitem[GJ79]{garey-johnson}
M.R. Garey and D.S. Johnson.
\newblock {\em Computers and Intractability: A Guide to the Theory of
  {NP}-completeness}.
\newblock WH Freeman and Company, San Francisco, 1979.

\bibitem[GKK{\etalchar{+}}01]{gupta01}
A~Gupta, J.~Kleinberg, A.~Kumar, R.~Rastogi, and B.~Yener.
\newblock Provisioning a virtual private network: A network design problem for
  multicommodity flow.
\newblock In {\em In Proc. ACM STOC}, pages 389--398, 2001.

\bibitem[GLW{\etalchar{+}}10]{SecondNet}
C.~Guo, G.~Lu, H.~Wang, S.~Yang, C.~Kong, P.~Sun, W.~Wu, and Y.~Zhang.
\newblock Secondnet: A data center network virtualization architecture with
  bandwidth guarantees.
\newblock {\em ACM CoNEXT}, 2010.

\bibitem[GT11]{ibm-dc-network}
M.~Girola and A.~M. Tarenzio.
\newblock Ibm data center networking.
\newblock {\em ibm.com/redbooks}, 2011.

\bibitem[hcbiglc]{aws-growth}
http://www.businessinsider.com/amazons-cloud-business-is-growing-like-crazy
  2011-7.
\newblock Amazon's cloud business is growing like crazy.

\bibitem[hh]{gogrid}
http://www.gogrid.com/cloud hosting/.
\newblock Go grid cloud hosting.

\bibitem[HLS09]{qa5}
R.~Hassin, A.~Levin, and M.~Sviridenko.
\newblock Approximating the minimum quadratic assignment problems.
\newblock {\em ACM Transactions on Algorithms}, 2009.

\bibitem[hsinfmc10]{mr-network}
http://blog.rapleaf.com/dev/2010/08/26/analyzing-some-interesting-networks-for-mapreduce
  clusters/.
\newblock Analyzing some interesting networks for map/reduce clusters.
\newblock 2010.

\bibitem[htta]{amazon}
http://aws.amazon.com/vpc/.
\newblock Amazon virtual private cloud.

\bibitem[httb]{tier-apps}
http://en.wikipedia.org/wiki/Multitier\_architecture.
\newblock Multi-tier application architecture.

\bibitem[httc]{cisco-dc-guide}
http://www.cisco.com/en/US/docs/solutions/Enterprise/Data\_Center/.
\newblock Cisco data center infrastructure 2.5 design guide.

\bibitem[httd]{rackspace}
http://www.rackspace.com/cloud/.
\newblock Rackspace cloud hosting.

\bibitem[htte]{rackspace-sla}
http://www.rackspace.com/cloud/legal/sla/.
\newblock Rackspace sla.

\bibitem[MPZ10]{TrafficAware}
X.~Meng, V.~Pappas, and Li~Zhang.
\newblock Improving the scalability of data center networks with traffic-aware
  virtual machine placement.
\newblock {\em INFOCOM}, 2010.

\bibitem[net]{networkx}
http://networkx.lanl.gov.

\bibitem[YYRC08]{rethinking}
M.~Yu, Y.~Yi, J.~Rexford, and M~Chiang.
\newblock Rethinking virtual network embedding: Substrate support for path
  splitting and migration.
\newblock {\em SIGCOMM}, 2008.

\bibitem[ZA06]{zhu}
Y.~Zhu and M.~Ammar.
\newblock Algorithms for assigning substrate network resources to virtual
  network components.
\newblock {\em INFOCOM}, 2006.

\end{thebibliography}
\end{document}